\documentclass[journal,twoside,web]{ieeecolor}

\usepackage{generic}
\usepackage{cite}
\usepackage{amsmath,amssymb,amsfonts}
\usepackage{algorithm}
\usepackage{algpseudocode}

\usepackage{listings}
\usepackage{graphicx}
\usepackage{textcomp}
\usepackage{url}
\usepackage{hyperref}
\usepackage{graphicx}

\usepackage{epsfig} 
\usepackage{cite}
\usepackage{lcsys}
\usepackage[caption=false,font=footnotesize]{subfig}

\newtheorem{theorem}{Theorem}[section]

\newtheorem{proposition}[theorem]{Proposition}
\newtheorem{corollary}[theorem]{Corollary}
\newtheorem{definition}[theorem]{Definition}
\newtheorem{remark}[theorem]{Remark}

\usepackage{amsmath} 
\usepackage{amssymb}  
\usepackage{xcolor}
\usepackage{caption}
\usepackage{graphicx}
\usepackage{mathtools}
\usepackage{caption}
\usepackage{float}
\usepackage{tikz}
\usepackage{url}

\usepackage{pgfplots}
\usepackage{xcolor}
\usetikzlibrary{matrix,arrows,calc,positioning,shapes,decorations.pathreplacing}
\usepackage{graphicx}

\usepackage{amsmath} 

\usepackage{enumerate}
\usepackage[all,tips]{xy}
\SelectTips{cm}{11}

\usepackage{bm}

\usepackage{mathtools}
\usepackage{siunitx}

\begin{document}

\def\BibTeX{{\rm B\kern-.05em{\sc i\kern-.025em b}\kern-.08em
    T\kern-.1667em\lower.7ex\hbox{E}\kern-.125emX}}
\markboth{\journalname, VOL. XX, NO. XX, XXXX 2017}
{Author \MakeLowercase{\textit{et al.}}: Preparation of Papers for IEEE Control Systems Letters (August 2022)}

\title{Feasibility and Singularity in High-Order Safety-Critical Control for Quadrotor UAVs}

\author{Omayra Yago Nieto$^{1}$ and Leonardo Colombo$^{2}$
\thanks{$^1$ Omayra Yago Nieto is with Universidad Politécnica de Madrid, Spain. {\tt\small omayra.yago.nieto@alumnos.upm.es}}
\thanks{$^2$Centre for Automation and Robotics (CSIC-UPM), Ctra. M300 Campo Real, Km 0,200, Arganda del Rey - 28500 Madrid, Spain, {\tt\small leonardo.colombo@csic.es}.}
\thanks{The authors acknowledge financial support from Grants PID2022-137909NB-C21 funded by MCIN/AEI/10.13039/501100011033 and iRoboCity2030-CM, Robótica Inteligente para Ciudades Sostenibles (TEC-2024/TEC-62), funded by the Programas de Actividades I+D en Tecnologías en la Comunidad de Madrid.}}

\maketitle
\thispagestyle{empty}


\begin{abstract}
We study high-order safety-critical control of quadrotor teams
under bounded inputs and pairwise collision-avoidance constraints.
Squared-distance barriers may lose thrust effectiveness when the
relative displacement is orthogonal to the available thrust
directions, while nonsingular constraints may still be jointly
infeasible under shared bounds. We characterize both phenomena
through pairwise effectiveness and aggregate feasibility measures.
A torque-aware dynamic extension exposes attitude torques in a
fourth-order barrier and prevents the extended-input row from
vanishing under positive thrust. Gaussian processes directly learn
the fourth-order HOCBF residual, providing robust margins without
differentiating unknown perturbations. Under residual-bound and
persistent-feasibility assumptions, the resulting QP guarantees
collision avoidance and recovers the nominal input whenever it
satisfies the robust safety and actuator constraints.
\end{abstract}

\begin{IEEEkeywords}
 High-order CBFs, quadrotor UAVs, multi-agent systems, safety-critical control, Gaussian processes.
\end{IEEEkeywords}

\section{Introduction}

Quadrotor UAVs are intrinsically underactuated: they generate
translational force only along the body-fixed vertical
direction~\cite{lee2010geometric}. Consequently, their collision-avoidance
capability depends on the instantaneous attitude and relative
geometry.

Control barrier functions (CBFs) enforce safety through forward
invariance of prescribed safe sets~\cite{AmesTAC2017}, while high-order
and exponential CBFs address constraints whose input appears only after
repeated differentiation~\cite{nguyen2016exponential,xiao2022highorder}.
Recent work further addresses robustness, performance-critical
constructions, and loss of input effectiveness at singular
configurations~\cite{tan2022highorder}, compatibility of multiple CBFs
under input bounds~\cite{tan2022compatibility}, robust multiple-CBF
enforcement with constrained inputs~\cite{cortez2022robustmultiple},
and distributed CBF-based safe navigation~
\cite{mestres2024distributed}.

Recent quadrotor safety filters have also incorporated
higher-order vehicle dynamics. Harms et al.~\cite{harms2025safe}
use a third-order nonlinear model and composite CBFs with
recursive-feasibility analysis, while
\cite{tscholl2026fastbridge} enforces high-relative-degree
collision constraints through the full quadrotor dynamics
under actuator limits. In contrast, our focus is the
pairwise thrust singularity and the exact aggregate
feasibility of shared bounded inputs in multi-UAV collision
avoidance.

Gaussian-process (GP) models have been combined with CBF-based
safety filters for uncertainty quantification and pointwise
feasibility~\cite{castaneda2021pointwise},
including HOCBF residual
learning~\cite{aali2024learning}. Here, residual learning provides a uniform
robust margin, while we study the geometry-induced singularity and
aggregate bounded-input feasibility.

Our recent work~\cite{yagonieto2026safetycritical}, developed an aggregated robust HOCBF-QP for quadrotor teams driven by decentralized navigation
functions, with safety conditional on QP feasibility. Here we show
that infeasibility has two distinct sources: pairwise loss of
thrust effectiveness and joint incompatibility of nonsingular
inequalities sharing bounded inputs.

We first characterize the thrust singular set and introduce
an exact scalar aggregate feasibility margin for the shared
bounded-thrust constraints. We then use a torque-aware dynamic extension: thrust rate appears at third
order, whereas attitude torques first enter at fourth order, where they
prevent the pairwise extended-input row from vanishing under positive
thrust bounds. Finally, we extend the aggregate feasibility analysis to the dynamically extended inputs, learn the scalar fourth-order HOCBF residual with GPs to construct robust margins, and formulate a torque-aware safety-critical QP with
pairwise channel nonsingularity and local feasibility guarantees.

Section~II studies thrust-level feasibility; Sections~III--IV
develop the torque-aware robust controller and its local feasibility
conditions; Section~V presents simulations, and Section~VI concludes.

\section{Thrust-Based High-Order Safety and Aggregate Feasibility}\label{secII}

\subsection{Pairwise HOCBF safety and thrust effectiveness}

Consider $N$ quadrotor UAVs indexed by
$\mathcal V=\{1,\ldots,N\}$. The state of agent $i$ is
$x_i=(p_i,v_i,R_i,\Omega_i)$, with $p_i,v_i,\Omega_i\in
\mathbb R^3$ and $R_i\in SO(3)$, and its control input is
$(f_i,\tau_i)\in\mathbb R\times\mathbb R^3$. The dynamics are (see \cite{lee2010geometric}, for instance)
\begin{align}
    \dot p_i &=v_i,\qquad
    \dot R_i=R_i\widehat{\Omega}_i,                         \label{eq:quad_kin}\\
    m_i\dot v_i
    &=f_iR_ie_3+m_i g+\Delta_{f,i}(p_i,v_i),                    \label{eq:quad_trans}\\
    J_i\dot\Omega_i+\Omega_i\times J_i\Omega_i
    &=\tau_i+\Delta_{\tau,i}(x_i),                          \label{eq:quad_rot}
\end{align}
where $m_i>0$, $J_i\succ0$, $e_3=(0,0,1)^\top$,
$g=-g_0e_3$, \(g_0>0\) and $\widehat a\,b=a\times b$.

Throughout, the analysis is restricted to a compact operating
set. The force perturbations $\Delta_{f,i}(p_i,v_i)$ are
$C^2$, while $\Delta_{\tau,i}(x_i)$ are locally Lipschitz
there. The restricted dependence
$\Delta_{f,i}=\Delta_{f,i}(p_i,v_i)$ ensures that, after the
dynamic extension introduced in Section~\ref{sec:singularity_free}, the uncertain
fourth-order contribution modifies only the drift term and not
the extended-input coefficient. We consider locally Lipschitz feedback laws. Under
these assumptions, the closed-loop solutions are unique and
the derivatives entering the fourth-order HOCBF are well
defined classically along trajectories. The fourth-order construction is required because the attitude
torques first appear explicitly after four differentiations of
the pairwise distance barrier.

For the safety analysis, let $\mu_{f,i}(p_i,v_i)$ be a learned
estimate of $\Delta_{f,i}(p_i,v_i)$ obtained here from GP
regression~\cite{rasmussen2006gpml}, and define
$\bar\Delta_i:=\Delta_{f,i}-\mu_{f,i}$. Let $\bar d_{f,i}(p_i,v_i)$ denote the associated
confidence bound, so that $\|d_{f,i}(p_i,v_i)\|\le
\bar\Delta_i(p_i,v_i)$ on the operating set.

Collision avoidance is encoded by an undirected graph
$\mathcal G=(\mathcal V,\mathcal E)$. For $(i,j)\in\mathcal E$,
define
\begin{equation}
    h_{ij}:=\|p_i-p_j\|^2-d_{\min}^2,
    \qquad
    \mathcal C_{ij}:=\{x:h_{ij}(x)\ge0\},
    \label{eq:hij}
\end{equation}
where $x=\operatorname{col}(x_1,\ldots,x_N)$ and
$d_{\min}>0$.

Because the singularity studied below corresponds to a loss of uniform relative degree, we briefly recall the
least-relative-degree HOCBF construction of~\cite{tan2022highorder}. For
$\dot x=F(x)+G(x)u$, a function $h\in C^r$ has least
relative degree $r$ on $D$ if
$L_GL_F^kh=0$, $k=0,\ldots,r-2$, where $L_F$ denotes
the Lie derivative along $F$ and
$L_Gq:=\nabla q^\top G$. It has uniform relative degree
$r$ if also $L_GL_F^{r-1}h\neq0$ on $D$. With sufficiently differentiable extended class-$\mathcal K$
functions $\alpha_k$,
$\varphi_0:=h$ and
$\varphi_k:=\dot\varphi_{k-1}+\alpha_k(\varphi_{k-1})$,
a locally Lipschitz feedback satisfying $\varphi_r\ge0$
renders
$\bigcap_{k=0}^{r-1}\{\varphi_k\ge0\}$ forward invariant
whenever the HOCBF condition is feasible.

 Let $p_{ij}:=p_i-p_j$, $v_{ij}:=v_i-v_j$, and $f_{ij}:=\operatorname{col}(f_i,f_j)$. Since $\dot h_{ij}=2p_{ij}^\top v_{ij}$, the true second-order HOCBF defined by $\varphi_{0,ij}:=h_{ij}$, $\varphi_{1,ij}:=\dot h_{ij}+\alpha_1(h_{ij})$, and $\varphi_{2,ij}:=\dot\varphi_{1,ij}+\alpha_2(\varphi_{1,ij})$ admits the decomposition \begin{equation} \varphi_{2,ij} = \widehat c_{ij}(x) +B_{ij}(x)f_{ij} +\delta_{ij}^{(2)}(x), \label{eq:pairwise_hocbf_uncertain} \end{equation} where \begin{equation} B_{ij}(x) := 2\begin{bmatrix} \dfrac{p_{ij}^\top R_ie_3}{m_i} & -\dfrac{p_{ij}^\top R_je_3}{m_j} \end{bmatrix}, \label{eq:Bij} \end{equation} and $\widehat c_{ij} :=2\|v_{ij}\|^2 +2p_{ij}^\top(\mu_{f,i}/m_i-\mu_{f,j}/m_j) +\alpha_1'(h_{ij})\dot h_{ij} +\alpha_2(\dot h_{ij}+\alpha_1(h_{ij}))$. Moreover, $\delta_{ij}^{(2)} :=2p_{ij}^\top(d_{f,i}/m_i-d_{f,j}/m_j)$ satisfies \[ \left|\delta_{ij}^{(2)}\right|\leq 2\|p_{ij}\|
\left( \frac{\bar\Delta_i}{m_i} + \frac{\bar\Delta_j}{m_j}\right)=:\rho_{ij}^{(2)}.\] Here, $\widehat c_{ij}$ contains the known terms and the
GP-based force estimates, whereas $\delta_{ij}^{(2)}$ collects
the residual force uncertainty. Writing the input-dependent
term separately as $B_{ij}f_{ij}$ isolates the relative-thrust
coefficient whose norm is used below to quantify thrust
effectiveness. Hence a robust pairwise HOCBF condition is $B_{ij}(x)f_{ij}\ge-\widehat c_{ij}(x)+\rho_{ij}^{(2)}(x)$. 
 
 \begin{definition} The pairwise thrust-effectiveness measure and thrust singular set are $\gamma_{ij}^{\rm th}(x):=\|B_{ij}(x)\|$ and $\Sigma_{ij}^{\rm th}:=\{x:B_{ij}(x)=0\}$. \end{definition} 
 
 Equivalently,
$\Sigma_{ij}^{\rm th}
=\{x:p_{ij}^\top R_ie_3=0,\,
p_{ij}^\top R_je_3=0\}$.
Thus $h_{ij}$ has uniform relative degree two away from
$\Sigma_{ij}^{\rm th}$. At the singular set, the thrust
magnitudes cannot modify $\varphi_{2,ij}$ instantaneously.
The uncertainty margin shifts the feasibility level but leaves
the thrust singular set unchanged.
 
\subsection{Robust aggregate HOCBF feasibility}

Let $f:=\operatorname{col}(f_1,\ldots,f_N)$ and assume bounded
thrust inputs
$\displaystyle{\mathcal F:=\prod_{i=1}^N[f_i^{\min},f_i^{\max}]},\,
 0<f_i^{\min}<f_i^{\max}$.
Embedding $B_{ij}(x)$ into the full thrust vector, let
$\bar B_{ij}(x)\in\mathbb R^{1\times N}$ denote the row whose
only nonzero entries are the $i$th and $j$th components
given by~\eqref{eq:Bij}, and define
$r_{ij}^{\rm rob}(x):=
-\widehat c_{ij}(x)+\rho_{ij}^{(2)}(x)$. The robust pairwise HOCBF condition is therefore
\begin{equation}
    \bar B_{ij}(x)f\ge r_{ij}^{\rm rob}(x),
    \qquad (i,j)\in\mathcal E.
    \label{eq:pairwise_affine_constraint}
\end{equation}

Stacking all pairwise inequalities gives
\begin{equation}
    B(x)f\ge r^{\rm rob}(x),
    \qquad f\in\mathcal F,
    \label{eq:aggregate_constraints}
\end{equation}
where $B(x)\in\mathbb R^{M\times N}$,
$M:=|\mathcal E|$, and $r^{\rm rob}(x)\in\mathbb R^M$
stacks the corresponding $r_{ij}^{\rm rob}$. Robust aggregate
feasibility refers to the simultaneous satisfaction of all
pairwise inequalities by a single $f\in\mathcal F$, i.e.,
their pointwise compatibility under the shared input
constraints~\cite{tan2022compatibility}.

\begin{definition}
For a fixed state $x$, define
\begin{equation}
    \Gamma_{\rm th}^{\rm rob}(x)
    :=
    \max_{f\in\mathcal F}
    \min_{(i,j)\in\mathcal E}
    \bigl(
        \bar B_{ij}(x)f-r_{ij}^{\rm rob}(x)
    \bigr).
    \label{eq:Gamma_th_primal}
\end{equation}
Thus, $\Gamma_{\rm th}^{\rm rob}(x)$ is the largest common
margin achievable across all robust pairwise HOCBF
inequalities with a single admissible thrust vector.
\end{definition}

\begin{remark}
Pairwise thrust effectiveness and robust aggregate feasibility
are distinct properties. The quantity $\gamma_{ij}^{\rm th}$
measures the input effectiveness of a single edge, whereas
$\Gamma_{\rm th}^{\rm rob}$ measures simultaneous feasibility
under the shared bounded thrust inputs and uncertainty margins.
\end{remark}

For $y\in\mathbb R^N$, let
$\sigma_{\mathcal F}(y):=\max_{f\in\mathcal F}y^\top f$
denote the support function of $\mathcal F$~\cite{boyd2004convex}.

\begin{theorem}\label{thm:robust_aggregate_feasibility}
Consider the robust bounded-thrust HOCBF inequalities
\eqref{eq:aggregate_constraints}. Let
$\Delta_M:=\{\lambda\in\mathbb R_{\ge0}^M:
\mathbf 1^\top\lambda=1\}$. Then
\begin{equation}
    \Gamma_{\rm th}^{\rm rob}(x)
    =
    \min_{\lambda\in\Delta_M}
    \left[
        \sigma_{\mathcal F}\big(B(x)^\top\lambda\big)
        -
        \lambda^\top r^{\rm rob}(x)
    \right].
    \label{eq:Gamma_th_dual}
\end{equation}
Consequently, \eqref{eq:aggregate_constraints} is feasible
if and only if $\Gamma_{\rm th}^{\rm rob}(x)\ge0$.
Moreover, $\Gamma_{\rm th}^{\rm rob}(x)>0$ if and only if
there exist $f\in\mathcal F$ and $\varepsilon>0$ such that
$B(x)f\ge r^{\rm rob}(x)+\varepsilon\mathbf 1$.
\end{theorem}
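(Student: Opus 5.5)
The plan is to prove the dual formula by a minimax argument and then read off both feasibility statements from the primal form~\eqref{eq:Gamma_th_primal}. Fix $x$ and abbreviate $B=B(x)$, $r=r^{\rm rob}(x)$. The starting observation is that for any vector $z\in\mathbb R^M$ we have $\min_{(i,j)\in\mathcal E} z_{ij}=\min_{\lambda\in\Delta_M}\lambda^\top z$, since a linear function on the simplex attains its minimum at a vertex. Applying this with $z=Bf-r$ rewrites the primal value as
\[
\Gamma_{\rm th}^{\rm rob}(x)=\max_{f\in\mathcal F}\min_{\lambda\in\Delta_M}\lambda^\top(Bf-r).
\]
The function $L(f,\lambda):=\lambda^\top Bf-\lambda^\top r$ is bilinear, hence continuous, concave (linear) in $f$ and convex (linear) in $\lambda$. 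Both $\mathcal F$ (a box) and $\Delta_M$ are nonempty, compact and convex.

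The key step is to invoke a minimax theorem (von Neumann's, or Sion's) to exchange max and min:
\[
\max_{f\in\mathcal F}\min_{\lambda\in\Delta_M}L(f,\lambda)=\min_{\lambda\in\Delta_M}\max_{f\in\mathcal F}L(f,\lambda).
\]
Both extrema are attained, by compactness and continuity, since the max of continuous functions over a compact set is continuous in $\lambda$. For fixed $\lambda$, the inner maximum is
\[
\max_{f\in\mathcal F}(B^\top\lambda)^\top f-\lambda^\top r=\sigma_{\mathcal F}(B^\top\lambda)-\lambda^\top r,
\]
which is exactly~\eqref{eq:Gamma_th_dual}. The only point requiring care is checking the minimax hypotheses, and these hold trivially here. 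This is the main (though mild) obstacle. I could alternatively give an LP-duality proof: write $\max\{t: Bf-t\mathbf 1\ge r,\ f\in\mathcal F\}$ and dualize. I prefer the minimax route because it avoids bookkeeping of the box multipliers. I would also note the explicit formula $\sigma_{\mathcal F}(y)=\sum_i\max(y_if_i^{\max},y_if_i^{\min})$, though it is not needed.

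For the consequences, I work with the primal form, where the maximum is attained at some $f^\star\in\mathcal F$ by compactness of $\mathcal F$ and continuity of $f\mapsto\min_{(i,j)}(\bar B_{ij}f-r_{ij})$.

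\textbf{Feasibility.} If $\Gamma_{\rm th}^{\rm rob}(x)\ge0$, then $f^\star$ satisfies every row of $Bf^\star\ge r$, so~\eqref{eq:aggregate_constraints} is feasible. Conversely, any feasible $f$ gives $\min_{(i,j)}(\bar B_{ij}f-r_{ij})\ge0$, hence $\Gamma_{\rm th}^{\rm rob}(x)\ge0$.

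\textbf{Strict feasibility.} If $\Gamma_{\rm th}^{\rm rob}(x)>0$, take $f=f^\star$ and $\varepsilon=\Gamma_{\rm th}^{\rm rob}(x)$; then $Bf^\star\ge r+\varepsilon\mathbf 1$. Conversely, if $Bf\ge r+\varepsilon\mathbf 1$ for some $f\in\mathcal F$ and $\varepsilon>0$, then the min over edges at this $f$ is at least $\varepsilon$, so $\Gamma_{\rm th}^{\rm rob}(x)\ge\varepsilon>0$.

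If $\mathcal E=\emptyset$ the statement is vacuous; I would assume $M\ge1$ so that $\Delta_M$ is nonempty.
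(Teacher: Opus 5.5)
Your proposal is correct and follows essentially the same route as the paper. Both use the simplex identity $\min_k s_k=\min_{\lambda\in\Delta_M}\lambda^\top s$, then Sion's minimax theorem on $\mathcal F\times\Delta_M$, then identify the inner maximum with $\sigma_{\mathcal F}(B(x)^\top\lambda)$, and read both feasibility equivalences off the primal form via an optimizer $f^\star$ with $\varepsilon=\Gamma_{\rm th}^{\rm rob}(x)$. Your explicit attainment arguments and the remark on $M\ge1$ are minor points the paper leaves implicit.
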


\begin{proof}
For any $s\in\mathbb R^M$, every $\lambda\in\Delta_M$
defines a convex combination of its components, so
$\lambda^\top s\ge\min_k s_k$, with equality at
$\lambda=e_{k^\star}$ for
$k^\star\in\arg\min_k s_k$. Hence
$\min_k s_k
    =
    \min_{\lambda\in\Delta_M}\lambda^\top s$.

Applying this identity to
$s=B(x)f-r^{\rm rob}(x)$ in
\eqref{eq:Gamma_th_primal} gives
\[
    \Gamma_{\rm th}^{\rm rob}(x)
    =
    \max_{f\in\mathcal F}
    \min_{\lambda\in\Delta_M}
    \lambda^\top\big(B(x)f-r^{\rm rob}(x)\big).
\]
The sets $\mathcal F$ and $\Delta_M$ are compact and convex,
and the objective is continuous and affine in each variable.
Hence, by Sion's minimax theorem~\cite{sion1958general},
\[
    \Gamma_{\rm th}^{\rm rob}(x)
    =
    \min_{\lambda\in\Delta_M}
    \max_{f\in\mathcal F}
    \lambda^\top\big(B(x)f-r^{\rm rob}(x)\big).
\]
For fixed $\lambda$,
\[
    \max_{f\in\mathcal F}\lambda^\top B(x)f
    =
    \sigma_{\mathcal F}\big(B(x)^\top\lambda\big),
\]
which proves~\eqref{eq:Gamma_th_dual}.

By~\eqref{eq:Gamma_th_primal},
$\Gamma_{\rm th}^{\rm rob}(x)\ge0$ if and only if there exists
$f\in\mathcal F$ such that
$B(x)f\ge r^{\rm rob}(x)$.
Likewise, if $\Gamma_{\rm th}^{\rm rob}(x)>0$, an optimizer
$f^\star$ satisfies
$B(x)f^\star\ge r^{\rm rob}(x)
+\Gamma_{\rm th}^{\rm rob}(x)\mathbf 1$.
The converse follows directly from
\eqref{eq:Gamma_th_primal}.
\end{proof}

\medskip

\begin{corollary}
Let $\bar f:=\operatorname{col}(\bar f_1,\ldots,\bar f_N)$ and
$d^f:=\operatorname{col}(d_1^f,\ldots,d_N^f)$, where
$\bar f_i:=(f_i^{\max}+f_i^{\min})/2$ and
$d_i^f:=(f_i^{\max}-f_i^{\min})/2$. Then
\begin{equation}
    \Gamma_{\rm th}^{\rm rob}(x)
    =
    \min_{\lambda\in\Delta_M}
    \left[
        \bar f^\top B(x)^\top\lambda
        +
        (d^f)^\top|B(x)^\top\lambda|
        -
        \lambda^\top r^{\rm rob}(x)
    \right].
    \label{eq:Gamma_th_box}
\end{equation}
\end{corollary}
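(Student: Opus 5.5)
The plan is to start from the dual characterization in Theorem~\ref{thm:robust_aggregate_feasibility} and compute the support function $\sigma_{\mathcal F}$ of the box $\mathcal F$ in closed form. Once $\sigma_{\mathcal F}(y)=\bar f^\top y+(d^f)^\top|y|$ is established for every $y\in\mathbb R^N$, where $|y|$ is taken componentwise, substituting $y=B(x)^\top\lambda$ into \eqref{eq:Gamma_th_dual} yields \eqref{eq:Gamma_th_box} directly. No further minimax argument is needed, since the exchange of $\max$ and $\min$ was already justified in the theorem.

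To compute the support function, I will reparametrize the box. Every $f\in\mathcal F$ can be written uniquely as $f_i=\bar f_i+d_i^f u_i$ with $u_i\in[-1,1]$; this uses $d_i^f>0$, which holds because $f_i^{\min}<f_i^{\max}$. Then
\[
y^\top f=\bar f^\top y+\sum_{i=1}^N d_i^f y_i u_i ,
\]
and the sum is separable in $u$. Each term is maximized independently over $u_i\in[-1,1]$, with value $d_i^f|y_i|$ attained at $u_i=\operatorname{sign}(y_i)$; any $u_i\in[-1,1]$ works when $y_i=0$. Since $d_i^f\ge0$, summing these maxima gives $\sigma_{\mathcal F}(y)=\bar f^\top y+(d^f)^\top|y|$. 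Equivalently, the maximizing vertex is $f_i=f_i^{\max}$ when $y_i>0$ and $f_i=f_i^{\min}$ when $y_i<0$.

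Finally, I will take $y=B(x)^\top\lambda$. This gives $\bar f^\top y=\bar f^\top B(x)^\top\lambda$ and $(d^f)^\top|y|=(d^f)^\top|B(x)^\top\lambda|$. Plugging these into \eqref{eq:Gamma_th_dual}, with the term $-\lambda^\top r^{\rm rob}(x)$ unchanged, produces the claimed formula.

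There is no real obstacle here. The only points needing care are that $|\cdot|$ is read componentwise and that the nonnegativity of $d^f$ justifies the separable maximization. It may also be worth noting that the resulting objective is convex and piecewise affine in $\lambda$, so the minimum over the compact simplex $\Delta_M$ is attained. This is consistent with writing $\min$ rather than $\inf$.
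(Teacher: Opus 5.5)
Your proposal is correct and follows essentially the same route as the paper. The paper also computes $\sigma_{\mathcal F}(y)=\bar f^\top y+(d^f)^\top|y|$ by maximizing coordinatewise over the product of intervals, then substitutes $y=B(x)^\top\lambda$ into \eqref{eq:Gamma_th_dual}; your reparametrization $f_i=\bar f_i+d_i^f u_i$ simply makes that per-coordinate maximization explicit.
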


\begin{proof}
Let $y:=B(x)^\top\lambda$. Since $\mathcal F$ is a Cartesian
product of intervals,
\[
    \sigma_{\mathcal F}(y)
    =
    \sum_{i=1}^N
    \max_{f_i\in[f_i^{\min},f_i^{\max}]}y_i f_i
    =
    \bar f^\top y+(d^f)^\top|y|.
\]
Substituting $y=B(x)^\top\lambda$ into
\eqref{eq:Gamma_th_dual} yields~\eqref{eq:Gamma_th_box}.
\end{proof}

\section{Torque-Aware HOCBF Control}
\label{sec:singularity_free}

\subsection{Torque-aware dynamic extension}\label{secIIIB}

To identify the extended input channel without differentiating
the uncertain terms in
\eqref{eq:quad_trans}--\eqref{eq:quad_rot},
consider the reference dynamics obtained by omitting these
terms. Uncertainty is reintroduced at the HOCBF level below.
Let $h_{ij,\mathrm{ref}}^{(k)}$ denote the $k$th time
derivative of $h_{ij}$ along the reference dynamics. This
notation distinguishes reference-dynamics quantities from the
GP-based quantities denoted by hats in Section~\ref{secII}.

Let $\zeta_i:=R_ie_3$. From~\eqref{eq:quad_kin},
$\dot\zeta_i=-R_i\widehat e_3\Omega_i$.
Introduce $q_i:=\dot f_i$ and
$\nu_i:=\dot q_i=\ddot f_i$, and define
$z_i:=(p_i,v_i,R_i,\Omega_i,f_i,q_i)$, $z:=\operatorname{col}(z_1,\ldots,z_N)$, $w_i:=\operatorname{col}(\nu_i,\tau_i)$, and
$w_{ij}:=\operatorname{col}(\nu_i,\nu_j,\tau_i,\tau_j)$.

Since gravity cancels from the relative dynamics, let
$a_{ij}:=f_i\zeta_i/m_i-f_j\zeta_j/m_j$. Then
\[
h_{ij,\mathrm{ref}}^{(3)}
=6v_{ij}^\top a_{ij}+2p_{ij}^\top\dot a_{ij},
\,
h_{ij,\mathrm{ref}}^{(4)}
=6a_{ij}^\top a_{ij}
+8v_{ij}^\top\dot a_{ij}
+2p_{ij}^\top\ddot a_{ij}.
\]
For $a_i:=f_i\zeta_i/m_i$,
\[
\dot a_i
=\frac{q_i}{m_i}\zeta_i+\frac{f_i}{m_i}\dot\zeta_i,
\qquad
\ddot a_i
=\frac{\nu_i}{m_i}\zeta_i
+\frac{2q_i}{m_i}\dot\zeta_i
+\frac{f_i}{m_i}\ddot\zeta_i.
\]
Using the reference rotational dynamics,
$\ddot\zeta_i=d_{\zeta,i}
-R_i\widehat e_3J_i^{-1}\tau_i$, where
$d_{\zeta,i}:=
-R_i\widehat\Omega_i\widehat e_3\Omega_i
+R_i\widehat e_3J_i^{-1}
(\Omega_i\times J_i\Omega_i)$. Consequently,
\begin{equation}
    h_{ij,\mathrm{ref}}^{(4)}
    =
    \ell_{ij,\mathrm{ref}}^{(4)}(z)
    +
    A_{ij}^{\mathrm{ext}}(z)w_{ij}.
    \label{eq:h4_affine}
\end{equation}
where $\ell_{ij,\mathrm{ref}}^{(4)}$ collects the reference drift terms and
\begin{equation}
\begin{aligned}
A_{ij}^{\rm ext}(z)
=
2\Bigg[
&\frac{p_{ij}^{\top}R_ie_3}{m_i},
-\frac{p_{ij}^{\top}R_je_3}{m_j},\\
&-\frac{f_i}{m_i}
 p_{ij}^{\top}R_i\widehat e_3J_i^{-1},
\frac{f_j}{m_j}
 p_{ij}^{\top}R_j\widehat e_3J_j^{-1}
\Bigg].
\end{aligned}
\label{eq:Aext}
\end{equation}
For quantitative input-effectiveness measures, let
$S_{ij}:=\operatorname{diag}
(s_{\nu,i},s_{\nu,j},s_{\tau,i}I_3,s_{\tau,j}I_3)\succ0$
contain fixed characteristic input scales and define
$\gamma_{ij}^{\rm ext,S}(z):=
\|A_{ij}^{\rm ext}(z)S_{ij}\|$.

 We use the same notation $\mathcal C_{ij}$ for its natural lift to the extended state space. We next show that this
extended input channel does not vanish at a thrust singularity.

\begin{theorem}
\label{thm:singularity_removal}
Let $(i,j)\in\mathcal E$ and suppose
$z\in\mathcal C_{ij}$. If at least one of $f_i$ or $f_j$
is nonzero, then $A_{ij}^{\rm ext}(z)\neq0$. In particular,
if $x\in\Sigma_{ij}^{\rm th}$ and $f_i\neq0$, then
\begin{equation}
    \gamma_{ij}^{\rm ext,S}(z)
\ge
\frac{2|f_i|s_{\tau,i}}
{m_i\lambda_{\max}(J_i)}
\|p_{ij}\|>0.
    \label{eq:gamma_ext_bound}
\end{equation}
and analogously for agent $j$. Moreover, on the admissible
thrust domain $f_i\in[f_i^{\min},f_i^{\max}]$,
\begin{equation}
\gamma_{ij}^{\rm ext,S}(z)
\ge
\frac{2d_{\min}}{m_i}
\min\left\{
s_{\nu,i},
\frac{f_i^{\min}s_{\tau,i}}
{\lambda_{\max}(J_i)}
\right\}
>0.
\label{eq:gamma_ext_uniform_bound}
\end{equation}

Consequently,
$h_{ij}$ has uniform relative degree four
with respect to $w_{ij}$ along the reference extended dynamics
whenever at least one vehicle has nonzero thrust.
\end{theorem}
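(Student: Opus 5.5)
The plan is to work directly with the explicit row $A_{ij}^{\rm ext}(z)$ in \eqref{eq:Aext}. The scaled row $A_{ij}^{\rm ext}S_{ij}$ has four blocks. Its squared norm is the sum of the squared norms of those blocks, so it is bounded below by the contribution of agent $i$'s two blocks (thrust-rate entry and torque block) alone, and symmetrically for $j$. Everything therefore reduces to a lower bound on the pair of blocks belonging to one agent. The key geometric fact I would establish first concerns the torque block. Write $u:=R_i^\top p_{ij}$. Then $p_{ij}^\top R_i\widehat e_3=u^\top\widehat e_3=-(e_3\times u)^\top$, using the skew-symmetry of $\widehat e_3$. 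Hence
\[
\bigl\|p_{ij}^\top R_i\widehat e_3 J_i^{-1}\bigr\|=\bigl\|J_i^{-1}(e_3\times u)\bigr\|\ge \frac{\|e_3\times u\|}{\lambda_{\max}(J_i)},
\]
because $J_i^{-1}$ is symmetric positive definite with smallest eigenvalue $1/\lambda_{\max}(J_i)$. Moreover $\|e_3\times u\|^2=\|p_{ij}\|^2-(p_{ij}^\top R_ie_3)^2$, since $R_i$ is orthogonal.

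With this identity the three claims follow in order. For nonvanishing, note first that $z\in\mathcal C_{ij}$ gives $\|p_{ij}\|\ge d_{\min}>0$. Suppose $f_i\neq0$. Then $p_{ij}^\top R_ie_3$ and $e_3\times u$ cannot both vanish, since that would force $u=0$ and hence $p_{ij}=0$. So either the $\nu_i$ entry or the $\tau_i$ block is nonzero. The case $f_j\neq0$ is symmetric. For \eqref{eq:gamma_ext_bound}, on $\Sigma_{ij}^{\rm th}$ we have $p_{ij}^\top R_ie_3=0$, so $\|e_3\times u\|=\|p_{ij}\|$. Keeping only the scaled $\tau_i$ block then gives exactly $2|f_i|s_{\tau,i}\|p_{ij}\|/(m_i\lambda_{\max}(J_i))$.

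For the uniform bound \eqref{eq:gamma_ext_uniform_bound}, set $c:=p_{ij}^\top R_ie_3$. Combining the two agent-$i$ blocks gives
\[
\gamma_{ij}^{\rm ext,S}(z)^2\ge \frac{4}{m_i^2}\Bigl[s_{\nu,i}^2c^2+\frac{f_i^2s_{\tau,i}^2}{\lambda_{\max}(J_i)^2}\bigl(\|p_{ij}\|^2-c^2\bigr)\Bigr].
\]
The right-hand side is a convex combination (weighted by $c^2$ and $\|p_{ij}\|^2-c^2$) of the two squared coefficients. It is therefore at least $\min\{s_{\nu,i}^2,(f_i s_{\tau,i}/\lambda_{\max}(J_i))^2\}\|p_{ij}\|^2$. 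Using $f_i\ge f_i^{\min}>0$ and $\|p_{ij}\|\ge d_{\min}$, then taking square roots, yields \eqref{eq:gamma_ext_uniform_bound}. This interpolation between the thrust-rate channel and the torque channel is the step I expect to need the most care. The two channels are complementary: one is active along $R_ie_3$ and the other orthogonal to it. The bound must hold uniformly over all attitudes, not just at the two extremes.

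Finally, uniform relative degree four requires $L_GL_F^kh_{ij}=0$ for $k\le2$ with respect to $w_{ij}$ along the reference extended dynamics. I would verify this from the expressions preceding \eqref{eq:h4_affine}. Since $\dot h_{ij}=2p_{ij}^\top v_{ij}$ and $\ddot h_{ij}=2\|v_{ij}\|^2+2p_{ij}^\top a_{ij}$, neither depends on $w_{ij}$. The third derivative $h_{ij,\mathrm{ref}}^{(3)}$ involves only $q_i$, $\zeta_i$ and $\dot\zeta_i=-R_i\widehat e_3\Omega_i$, which are states and not inputs. The inputs $\nu_i$ and $\tau_i$ first appear through $\ddot a_i$ in $h_{ij,\mathrm{ref}}^{(4)}$ with coefficient $A_{ij}^{\rm ext}$. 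Combined with $A_{ij}^{\rm ext}\neq0$, this gives the conclusion.
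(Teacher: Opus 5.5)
Your proposal is correct and follows essentially the same route as the paper's proof. Both pass to the body-frame vector $R_i^\top p_{ij}$, bound the torque block using the smallest eigenvalue $1/\lambda_{\max}(J_i)$ of $J_i^{-1}$, split $\|p_{ij}\|^2$ into its $e_3$-component (thrust-rate channel) and its orthogonal complement (torque channel) to get the uniform bound, and check that $w_{ij}$ first appears in $h_{ij,\mathrm{ref}}^{(4)}$. The differences are only presentational: you argue nonvanishing per agent rather than by cases on $B_{ij}(x)$, and you state explicitly the identity $\|e_3\times R_i^\top p_{ij}\|^2=\|p_{ij}\|^2-(p_{ij}^\top R_ie_3)^2$, which the paper uses implicitly.
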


\begin{proof}
If $B_{ij}(x)\neq0$, the first two components of
$A_{ij}^{\rm ext}$ in~\eqref{eq:Aext} cannot both vanish.

Consider $x\in\Sigma_{ij}^{\rm th}$. Since $z\in\mathcal C_{ij}$,
$\|p_{ij}\|\ge d_{\min}>0$.
Let $q_i^p:=R_i^\top p_{ij}$. Then
$\|q_i^p\|=\|p_{ij}\|$ and, by
$x\in\Sigma_{ij}^{\rm th}$,
$(q_i^p)^\top e_3=0$. Hence
$\|(q_i^p)^\top\widehat e_3\|
    =
    \|q_i^p\|
    =
    \|p_{ij}\|$. Since $J_i\succ0$,
\[
\begin{aligned}
\left\|
p_{ij}^\top R_i\widehat e_3J_i^{-1}
\right\|
\ge
\sigma_{\min}(J_i^{-1})
\left\|
(q_i^p)^\top\widehat e_3
\right\| =
\frac{\|p_{ij}\|}{\lambda_{\max}(J_i)}.
\end{aligned}
\]
Thus the $i$th scaled torque block of
$A_{ij}^{\rm ext}S_{ij}$ has norm at least
$\frac{2|f_i|s_{\tau,i}}{m_i\lambda_{\max}(J_i)}
\|p_{ij}\|$, and~\eqref{eq:gamma_ext_bound} follows.
The argument for agent $j$ is identical. 

For the uniform bound, write
$q_i^p=(q_{i,1}^p,q_{i,2}^p,q_{i,3}^p)^\top$.
The $i$th thrust-acceleration and torque blocks give
\[
\begin{aligned}
\|A_{ij}^{\rm ext}S_{ij}\|^2
&\ge
\frac{4}{m_i^2}
\left[
s_{\nu,i}^2(q_{i,3}^p)^2
+
\frac{f_i^2s_{\tau,i}^2}
{\lambda_{\max}^2(J_i)}
\left((q_{i,1}^p)^2+(q_{i,2}^p)^2\right)
\right] \\
&\ge
\frac{4d_{\min}^2}{m_i^2}
\min\left\{
s_{\nu,i}^2,
\frac{(f_i^{\min})^2s_{\tau,i}^2}
{\lambda_{\max}^2(J_i)}
\right\},
\end{aligned}
\]
which proves~\eqref{eq:gamma_ext_uniform_bound}. Since $w_{ij}$
does not enter $h_{ij}^{(k)}$
for $k\le 3$, while the coefficient of $w_{ij}$
in $h_{ij}^{(4)}$
is $A_{ij}^{\rm ext}\neq 0$, the relative-degree claim follows.
\end{proof}

\begin{remark}
At a thrust singularity, the thrust-acceleration coefficients
vanish but the torque coefficients remain nonzero because
torques rotate the thrust directions. Thus the dynamic
extension changes the safety input channel rather than merely
differentiating the second-order HOCBF.
\end{remark}

\subsection{Robust extended feasibility and QP}

For the true dynamics, define, for $k=1,\ldots,4$
\begin{equation}
    \eta_{0,ij}:=h_{ij},\,
    \eta_{k,ij}:=
    \dot\eta_{k-1,ij}
    +\alpha_k(\eta_{k-1,ij}),
    \label{eq:eta_recursion}
\end{equation}
Let $\eta_{4,ij}^{\mathrm{ref}}$ denote the corresponding
fourth-order HOCBF expression along the reference dynamics.
By~\eqref{eq:h4_affine},
\begin{equation}
    \eta_{4,ij}^{\mathrm{ref}}(z,w_{ij})
    =
    \ell_{ij}^{\mathrm{ref}}(z)
    +
    A_{ij}^{\mathrm{ext}}(z)w_{ij},
    \label{eq:eta4_affine}
\end{equation}
where $\ell_{ij}^{\mathrm{ref}}$ collects the reference drift and
lower-order HOCBF terms.

For the true dynamics,
\[
\frac{d^2}{dt^2}\!\left(
\frac{f_i\zeta_i+\Delta_{f,i}}{m_i}\right)
=
\frac{\nu_i\zeta_i+2q_i\dot\zeta_i
+f_i\ddot\zeta_i+\ddot\Delta_{f,i}}{m_i},
\]
where $\ddot\zeta_i
=
d_{\zeta,i}
-R_i\widehat e_3J_i^{-1}
\big(\tau_i+\Delta_{\tau,i}\big)$. Moreover, since $\Delta_{f,i}=\Delta_{f,i}(p_i,v_i)\in C^2$, $\dot\Delta_{f,i}
=
D_p\Delta_{f,i}v_i
+D_v\Delta_{f,i}\dot v_i$, and $\ddot\Delta_{f,i}$ depends on
$(p_i,v_i,R_i,\Omega_i,f_i,q_i)$ but not on
$w_i=(\nu_i,\tau_i)$. Hence the uncertain terms modify only
the drift with respect to $w$, and the true and reference
fourth-order HOCBF expressions have the same input coefficient:
\[
\eta_{4,ij}(z,w_{ij})
=
\ell^{\rm true}_{ij}(z)+A^{\rm ext}_{ij}(z)w_{ij}.
\]
Define
$\delta_{ij}^{(4)}(z):=
\ell_{ij}^{\rm true}(z)-\ell_{ij}^{\mathrm{ref}}(z)$
and assume $|\delta_{ij}^{(4)}(z)|
\le
\rho_{ij}^{(4)}(z)$ on the prescribed operating domain. Hence a robust
$4^{th}$-order HOCBF condition is
$A_{ij}^{\rm ext}(z)w_{ij}
    \ge
    \rho_{ij}^{(4)}(z)-\ell_{ij}^{\mathrm{ref}}(z)$.

Since $f_i$ is an extended state, its physical bounds are enforced
through auxiliary second-order HOCBFs. Define
$g_{i,-}:=f_i-f_i^{\min},\,
g_{i,+}:=f_i^{\max}-f_i$, and $\vartheta_{1,i}^{\pm}
:=\dot g_{i,\pm}+\alpha_{f,1}(g_{i,\pm}),\,
\vartheta_{2,i}^{\pm}
:=\dot\vartheta_{1,i}^{\pm}
+\alpha_{f,2}(\vartheta_{1,i}^{\pm})$. Since $\dot f_i=q_i$ and $\dot q_i=\nu_i$,
$\vartheta_{2,i}^{\pm}$ are affine in $\nu_i$. Let $\mathcal W_0$ be a nonempty compact
convex set of admissible extended inputs, possibly encoding
coupled actuator limits, and define
\[
\mathcal W_a(z):=
\left\{
w\in\mathcal W_0:
\vartheta_{2,i}^{-}(z,w_i)\ge0,
\vartheta_{2,i}^{+}(z,w_i)\ge0,\, i\in\mathcal V
\right\}.
\]
For each fixed $z$, $\mathcal W_a(z)$ is convex and compact
whenever nonempty.

Let $w:=\operatorname{col}(w_1,\ldots,w_N)$. Embed $A_{ij}^{\rm ext}$ into the full input
space as $\bar A_{ij}^{\rm ext}$ and define
$r_{{\rm ext},ij}^{\rm rob}
:=\rho_{ij}^{(4)}-\ell_{ij}^{\mathrm{ref}}$.
Stacking the robust inequalities gives
\begin{equation}
    A^{\rm ext}(z)w
    \ge
    r_{\rm ext}^{\rm rob}(z),
    \qquad
    w\in\mathcal W_a(z),
    \label{eq:aggregate_ext}
\end{equation}
where $A^{\rm ext}$ stacks the rows $\bar A_{ij}^{\rm ext}$.

In analogy with~\eqref{eq:Gamma_th_primal}, define
\begin{equation}
    \Gamma_{\rm ext}^{\rm rob}(z)
    :=
    \max_{w\in\mathcal W_a(z)}
    \min_{(i,j)\in\mathcal E}
    \left(
        \bar A_{ij}^{\rm ext}(z)w
        -
        r_{{\rm ext},ij}^{\rm rob}(z)
    \right).
    \label{eq:Gamma_ext}
\end{equation}

For $y$ in the extended input space, let
$\sigma_{\mathcal W_a(z)}(y):=
\max_{w\in\mathcal W_a(z)} y^\top w$.

\begin{proposition}
\label{prop:extended_feasibility}
Suppose $\mathcal W_a(z)$ is nonempty. Let
$M:=|\mathcal E|$ and
$\Delta_M:=\{\lambda\in\mathbb R_{\ge0}^M:\mathbf 1^\top\lambda=1\}$.
Then
\begin{equation}
\Gamma_{\rm ext}^{\rm rob}(z)
=
\min_{\lambda\in\Delta_M}
\left[
    \sigma_{\mathcal W_a(z)}
    \big(A^{\rm ext}(z)^\top\lambda\big)
    -
    \lambda^\top r_{\rm ext}^{\rm rob}(z)
\right].
\label{eq:Gamma_ext_dual}
\end{equation}
Consequently, the robust fourth-order aggregate HOCBF
constraints~\eqref{eq:aggregate_ext} are feasible if and only if
$\Gamma_{\rm ext}^{\rm rob}(z)\ge0$.
\end{proposition}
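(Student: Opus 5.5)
The plan is to repeat the argument of Theorem~\ref{thm:robust_aggregate_feasibility}, with the box $\mathcal F$ replaced by the state-dependent admissible set $\mathcal W_a(z)$. Throughout, $z$ is fixed, so $A^{\rm ext}(z)$ and $r_{\rm ext}^{\rm rob}(z)$ are a constant matrix and a constant vector.

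\textbf{Step 1: properties of $\mathcal W_a(z)$.} I first check that $\mathcal W_a(z)$ is compact and convex. For fixed $z$, each $\vartheta_{2,i}^{\pm}(z,w_i)$ is affine in $\nu_i$, as noted before the statement. Hence each constraint $\vartheta_{2,i}^{\pm}\ge0$ defines a closed half-space in $w$, or the whole space or the empty set if the coefficient of $\nu_i$ vanishes. So $\mathcal W_a(z)$ is the intersection of the nonempty compact convex set $\mathcal W_0$ with finitely many closed convex sets. It is therefore compact and convex, and it is nonempty by hypothesis. Compactness also ensures that the maximum in~\eqref{eq:Gamma_ext} and the support function $\sigma_{\mathcal W_a(z)}$ are attained and finite.

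\textbf{Step 2: rewrite the inner minimum.} Exactly as in the proof of Theorem~\ref{thm:robust_aggregate_feasibility}, I use $\min_k s_k=\min_{\lambda\in\Delta_M}\lambda^\top s$ with $s=A^{\rm ext}(z)w-r_{\rm ext}^{\rm rob}(z)$. This gives
\[
\Gamma_{\rm ext}^{\rm rob}(z)=\max_{w\in\mathcal W_a(z)}\min_{\lambda\in\Delta_M}\lambda^\top\bigl(A^{\rm ext}(z)w-r_{\rm ext}^{\rm rob}(z)\bigr).
\]
The function is bilinear, hence continuous, concave in $w$ and convex in $\lambda$. Both $\mathcal W_a(z)$ and $\Delta_M$ are compact and convex, so Sion's minimax theorem~\cite{sion1958general} lets me exchange the max and the min. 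For fixed $\lambda$, the inner maximum equals $\sigma_{\mathcal W_a(z)}(A^{\rm ext}(z)^\top\lambda)-\lambda^\top r_{\rm ext}^{\rm rob}(z)$, which yields~\eqref{eq:Gamma_ext_dual}. The outer minimum over $\lambda$ is attained because $\sigma_{\mathcal W_a(z)}$ is a finite convex function, hence continuous, and $\Delta_M$ is compact.

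\textbf{Step 3: the feasibility equivalence.} This follows from the primal form~\eqref{eq:Gamma_ext}, using that the maximum is attained on the compact set $\mathcal W_a(z)$. If $\Gamma_{\rm ext}^{\rm rob}(z)\ge0$, a maximizer $w^\star$ satisfies every row of~\eqref{eq:aggregate_ext}. Conversely, any feasible $w$ gives a min-margin of at least $0$, so $\Gamma_{\rm ext}^{\rm rob}(z)\ge0$.

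The only step that differs in substance from the thrust-level theorem is Step~1: the state-dependent auxiliary constraints must not destroy compactness or convexity. Convexity depends on $\vartheta_{2,i}^{\pm}$ being affine in $w_i$ for fixed $z$, which the dynamic extension guarantees. Nonemptiness is assumed rather than proved, which is why it appears as a hypothesis of the proposition.
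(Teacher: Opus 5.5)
Your proof is correct and follows the paper's route: the paper's proof just invokes the Sion minimax argument of Theorem~\ref{thm:robust_aggregate_feasibility}, with $\mathcal F$, $B$, $r^{\rm rob}$ replaced by $\mathcal W_a(z)$, $A^{\rm ext}$, $r_{\rm ext}^{\rm rob}$. Your Step~1 only makes explicit the compactness and convexity of $\mathcal W_a(z)$, which the paper asserts in the text just before the proposition.
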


\begin{proof}
The result follows from the minimax argument of
Theorem~\ref{thm:robust_aggregate_feasibility}, with
$\mathcal F$, $B$, and $r^{\rm rob}$ replaced by
$\mathcal W_a(z)$, $A^{\rm ext}$, and $r_{\rm ext}^{\rm rob}$,
respectively.
\end{proof}

Let $w^{\rm nom}(z)\in\mathcal W_0$ be a locally Lipschitz
nominal extended input. The robust torque-aware safety-critical controller is
\begin{equation}
\begin{aligned}
    w^\star
    =\arg\min_{w\in\mathcal W_a(z)}\quad&
    \frac12\|w-w^{\rm nom}\|_{H}^{2}\\
    {\rm s.t.}\quad&
    A^{\rm ext}(z)w\ge r_{\rm ext}^{\rm rob}(z),
\end{aligned}
\label{eq:extended_qp}
\end{equation}
where $H\succ0$. The applied inputs satisfy
$\dot f_i=q_i$, $\dot q_i=\nu_i^\star$, while
$\tau_i^\star$ is applied to~\eqref{eq:quad_rot}.

Let $\mathcal C:=\bigcap_{(i,j)\in\mathcal E}\mathcal C_{ij}$ and
$\mathcal C_{\rm H}:=
\bigcap_{(i,j)\in\mathcal E}
\bigcap_{k=0}^{3}\{z:\eta_{k,ij}(z)\ge0\}$.
Since $\eta_{0,ij}=h_{ij}$,
$\mathcal C_{\rm H}\subset\mathcal C$. Define also
$\displaystyle{
\mathcal C_f:=
\bigcap_{i\in\mathcal V}
\bigcap_{\sigma\in\{-,+\}}
\left\{
z:g_{i,\sigma}(z)\ge0,\;
\vartheta_{1,i}^{\sigma}(z)\ge0
\right\}}$.

\begin{theorem}
\label{thm:sf_safety}
Consider the dynamically extended quadrotor system and suppose
$z(0)\in\mathcal C_{\rm H}\cap\mathcal C_f$.
Assume that the fourth-order residual bound holds for every
$(i,j)\in\mathcal E$ on the prescribed operating domain,
\eqref{eq:extended_qp} is feasible for all $t\geq0$, and its
solution is locally Lipschitz. Then
$\mathcal C_{\rm H}\cap\mathcal C_f$ is forward invariant,
collision avoidance is guaranteed, and
$f_i(t)\in[f_i^{\min},f_i^{\max}]$ for all
$i\in\mathcal V$. Moreover, if
$w^{\rm nom}(z)\in\mathcal W_a(z)$ satisfies the robust
collision constraints, then $w^\star=w^{\rm nom}$.
\end{theorem}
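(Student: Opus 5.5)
The plan is to reduce everything to the standard HOCBF invariance mechanism recalled in Section~\ref{secII}, applied separately to each pairwise fourth-order barrier and to each thrust-bound second-order barrier. Under the standing assumptions, the QP~\eqref{eq:extended_qp} is feasible for all $t\ge0$ and $w^\star(z)$ is locally Lipschitz. Together with the local Lipschitz regularity of the dynamics and of $\Delta_{\tau,i}$, and the $C^2$ regularity of $\Delta_{f,i}$, this makes the closed-loop extended system $\dot z=F(z,w^\star(z))$ have unique solutions. Along these solutions, $\eta_{k,ij}$ for $k\le4$ and $\vartheta_{k,i}^\pm$ for $k\le2$ are classically differentiable, as noted in the paper. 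I will work on the maximal interval of existence, implicitly within the compact operating set where the residual bound is assumed to hold.

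\emph{Step 1: the true fourth-order condition holds along trajectories.} For each $(i,j)\in\mathcal E$, I use the decomposition established before the statement:
\[
\eta_{4,ij}=\ell^{\rm true}_{ij}+A^{\rm ext}_{ij}w_{ij}=\ell^{\rm ref}_{ij}+\delta^{(4)}_{ij}+A^{\rm ext}_{ij}w_{ij}.
\]
The QP constraint gives $A^{\rm ext}_{ij}w^\star_{ij}\ge\rho^{(4)}_{ij}-\ell^{\rm ref}_{ij}$, and the residual bound gives $\delta^{(4)}_{ij}\ge-\rho^{(4)}_{ij}$. Combining the two yields $\eta_{4,ij}(z,w^\star_{ij})\ge0$ for all $t$.

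\emph{Step 2: backward cascade for the pairwise barriers.} From $\eta_{4,ij}\ge0$ we get $\dot\eta_{3,ij}\ge-\alpha_4(\eta_{3,ij})$. Since $\eta_{3,ij}(z(0))\ge0$, the comparison lemma, with the extended class-$\mathcal K$ property $\alpha_4(0)=0$, gives $\eta_{3,ij}(t)\ge0$. I then descend inductively through $k=3,2,1$: each $\eta_{k,ij}\ge0$ gives $\dot\eta_{k-1,ij}\ge-\alpha_k(\eta_{k-1,ij})$, which keeps $\eta_{k-1,ij}\ge0$. This shows $z(t)\in\mathcal C_{\rm H}$, and hence $h_{ij}\ge0$, i.e.\ $\|p_i-p_j\|\ge d_{\min}$, which is collision avoidance. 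The same two-level cascade applies to the thrust bounds. Since $w^\star\in\mathcal W_a(z)$ enforces $\vartheta^\pm_{2,i}\ge0$, and these functions are exactly affine in $\nu_i$ with no uncertainty because $\dot f_i=q_i$ and $\dot q_i=\nu_i$ hold exactly, we obtain $\vartheta_{1,i}^\pm\ge0$ and then $g_{i,\pm}\ge0$. This gives $z\in\mathcal C_f$ and $f_i(t)\in[f_i^{\min},f_i^{\max}]$. Forward invariance of $\mathcal C_{\rm H}\cap\mathcal C_f$ follows.

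\emph{Step 3: recovery of the nominal input.} The QP has a strictly convex objective because $H\succ0$, and its feasible set is convex and nonempty, so the minimizer is unique. If $w^{\rm nom}\in\mathcal W_a(z)$ satisfies $A^{\rm ext}w^{\rm nom}\ge r^{\rm rob}_{\rm ext}$, it is feasible and attains cost $0$, which is the global lower bound of the objective. Therefore $w^\star=w^{\rm nom}$.

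The main delicate point is Step 1 combined with solution existence. The residual bound is only assumed on the operating domain, so I must argue that the trajectory remains there, or simply take this as part of the standing hypothesis. I also need to check that the applied torque enters $\ddot\zeta_i$ together with $\Delta_{\tau,i}$ in a way that is absorbed into $\delta^{(4)}_{ij}$. That absorption is delicate because $\Delta_{\tau,i}$ appears multiplied by $f_i$, but it enters only through drift. I will handle this by appealing to the true/reference decomposition stated just before Proposition~\ref{prop:extended_feasibility}, which already places all uncertainty in $\ell^{\rm true}_{ij}$.
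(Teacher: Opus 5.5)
Your proposal is correct and follows the paper's proof essentially step for step. The QP constraint combined with the residual bound gives $\eta_{4,ij}\ge0$, the comparison principle cascades this down to $h_{ij}\ge0$ (and likewise from $\vartheta_{2,i}^{\pm}\ge0$ down to $g_{i,\pm}\ge0$), and strict convexity of the objective with zero cost at a feasible $w^{\rm nom}$ gives $w^\star=w^{\rm nom}$; the regularity and operating-domain caveats you flag are ones the paper also takes as standing hypotheses rather than proving.
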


\begin{proof}
For every $(i,j)\in\mathcal E$, the QP constraint gives
$\eta_{4,ij}^{\mathrm{ref}}(z,w_{ij}^{\star})
\geq\rho_{ij}^{(4)}(z)$.
Since
$\eta_{4,ij}=\eta_{4,ij}^{\mathrm{ref}}+\delta_{ij}^{(4)}$
and $|\delta_{ij}^{(4)}|\leq\rho_{ij}^{(4)}$, we obtain
$\eta_{4,ij}\geq0$.

Therefore
$\dot\eta_{3,ij}\ge-\alpha_4(\eta_{3,ij})$.
Since $\eta_{3,ij}(0)\ge0$, the comparison principle gives
$\eta_{3,ij}(t)\ge0$. Repeating recursively yields
$\eta_{2,ij}(t)\ge0$, $\eta_{1,ij}(t)\ge0$, and
$\eta_{0,ij}(t)=h_{ij}(t)\ge0$.
Hence $z(t)\in\mathcal C_{\rm H}$ for all $t\ge0$.
Since $\mathcal C_{\rm H}\subset\mathcal C$, collision
avoidance follows.

Moreover, since $w^\star\in\mathcal W_a(z)$,
$\vartheta_{2,i}^{\pm}\ge0$ for every $i\in\mathcal V$.
Hence
$\dot\vartheta_{1,i}^{\pm}
\ge-\alpha_{f,2}(\vartheta_{1,i}^{\pm})$,
and, since $\vartheta_{1,i}^{\pm}(0)\ge0$, the comparison
principle gives $\vartheta_{1,i}^{\pm}(t)\ge0$.
Therefore
$\dot g_{i,\pm}\ge-\alpha_{f,1}(g_{i,\pm})$,
and $g_{i,\pm}(0)\ge0$ implies $g_{i,\pm}(t)\ge0$.
Thus $\mathcal C_f$ is forward invariant and
$f_i(t)\in[f_i^{\min},f_i^{\max}]$ for all $i$.

Finally, if $w^{\rm nom}(z)\in\mathcal W_a(z)$ satisfies the
robust collision constraints, it is feasible for
\eqref{eq:extended_qp}. Since the objective is strictly convex
and is minimized at $w=w^{\rm nom}$, we obtain
$w^\star=w^{\rm nom}$.
\end{proof}

\begin{remark}

The common true/reference input coefficient
together with Theorem~\ref{thm:singularity_removal} ensures a nonzero pairwise extended-input row, whereas Proposition~\ref{prop:extended_feasibility} shows that aggregate
feasibility remains a distinct bounded-input property.
Thus, several nonsingular fourth-order constraints may
still be jointly infeasible.
\end{remark}

\section{Learning-Based Robust Margins and Local Feasibility}
\label{sec:robust_singularity_free_safety}
Section~\ref{sec:singularity_free} incorporates a uniform fourth-order residual
bound into the robust HOCBF constraints and QP. We now construct this bound from data and quantify local feasibility.

At second order, learned force estimates and uncertainty bounds
provide $\mu_{f,i}$ and $\bar\Delta_i$ in Section~\ref{secII}. At fourth order,
propagating them through successive derivatives may require derivatives
of the unknown perturbations, so we instead learn directly the scalar
residual $\delta_{ij}^{(4)}$ from Section~\ref{sec:singularity_free}.

For each $(i,j)\in\mathcal E$, let
$z_{ij}:=\operatorname{col}(z_i,z_j)$. By the pairwise
structure above, $\delta_{ij}^{(4)}$ depends only on $z_{ij}$.
We model $\delta_{ij}^{(4)}(z_{ij})$ by a GP with posterior
mean $\mu_{ij}^{\delta}(z_{ij})$ and standard deviation
$\sigma_{ij}^{\delta}(z_{ij})$~\cite{rasmussen2006gpml}. Thus, the GP
input dimension is pairwise and does not grow with the number of agents.
For prescribed $\beta_{ij}>0$, assume the event
\[
\mathcal E_{\rm GP}:=
\left\{
\left|
\delta_{ij}^{(4)}(z_{ij})-\mu_{ij}^{\delta}(z_{ij})
\right|
\le
\sqrt{\beta_{ij}}\sigma_{ij}^{\delta}(z_{ij}),
\ \forall (i,j)\in\mathcal E
\right\}
\]
holds uniformly on the prescribed operating domain. Such bounds can be
obtained under standard GP concentration assumptions~\cite{srinivas2010gpucb};
here $\mathcal E_{\rm GP}$ is an explicit assumption of the safety analysis.
Define $\rho_{ij}^{(4)}(z):=
|\mu_{ij}^{\delta}(z_{ij})|
+\sqrt{\beta_{ij}}\sigma_{ij}^{\delta}(z_{ij})$.
Then $|\delta_{ij}^{(4)}(z)|\le\rho_{ij}^{(4)}(z)$
for every $(i,j)\in\mathcal E$, and
Theorem~\ref{thm:sf_safety} applies on this event.

For the quantitative feasibility analysis below, let $S\succ0$
be the global diagonal scaling whose pairwise restriction is
$S_{ij}$ defined in Section~\ref{secIIIB}, and set
$\bar w=S^{-1}w$. Then
$A^{\rm ext}w=(A^{\rm ext}S)\bar w$. Since $S$ is nonsingular,
normalization does not change whether an input-effectiveness
row vanishes, but makes its norm dimensionally meaningful.

\subsection{Local feasibility}

Theorem~\ref{thm:sf_safety} provides a conditional safety guarantee under
persistent feasibility of~\eqref{eq:extended_qp}. We now give pointwise sufficient
conditions for feasibility of the robust constraints.

For $(i,j)\in\mathcal E$, define the nominal robust
deficit $r_{ij}(z)
:=
\left[
\rho_{ij}^{(4)}(z)
-
\eta_{4,ij}^{\mathrm{ref}}(z,w_{ij}^{\mathrm{nom}})
\right]_{+}$. Set $A_{ij}^{\rm ext,S}:=A_{ij}^{\rm ext}S_{ij}$. Let $\mathcal W_{a,ij}(z)$ denote the projection of
$\mathcal W_a(z)$ onto $w_{ij}$, and define
$\bar{\mathcal W}_{a,ij}(z):=
S_{ij}^{-1}\mathcal W_{a,ij}(z)$.

\begin{proposition}
\label{prop:single_edge_feasibility}
Suppose
$\mathcal B(\bar w^{\rm nom}_{ij},r_w)\subset
\bar{\mathcal W}_{a,ij}(z)$, where
$\bar w^{\rm nom}_{ij}:=S_{ij}^{-1}w^{\rm nom}_{ij}$
and $r_w>0$. If
$r_{ij}(z)<r_w\|A_{ij}^{\rm ext,S}(z)\|$, then the robust
constraint for edge $(i,j)$ is feasible.
\end{proposition}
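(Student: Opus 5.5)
The plan is to exhibit an explicit feasible input by moving from the nominal input along the scaled gradient direction of the edge constraint. Work in normalized coordinates $\bar w_{ij}=S_{ij}^{-1}w_{ij}$. There, by \eqref{eq:eta4_affine},
\[
\eta_{4,ij}^{\rm ref}(z,w_{ij})=\ell_{ij}^{\rm ref}(z)+A_{ij}^{\rm ext,S}(z)\bar w_{ij}.
\]
The robust edge constraint reads $A_{ij}^{\rm ext,S}\bar w_{ij}\ge\rho_{ij}^{(4)}-\ell_{ij}^{\rm ref}$. Evaluated at the nominal input, the shortfall is exactly $\rho_{ij}^{(4)}-\eta_{4,ij}^{\rm ref}(z,w_{ij}^{\rm nom})\le r_{ij}(z)$, by definition of the positive part.

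If $r_{ij}(z)=0$, the nominal input already satisfies the constraint. It is admissible because $\bar w^{\rm nom}_{ij}$ is the center of a ball contained in $\bar{\mathcal W}_{a,ij}(z)$, so we are done. Otherwise $r_{ij}(z)>0$, and the hypothesis forces $A_{ij}^{\rm ext,S}(z)\neq0$. We can then define the unit direction
\[
u:=\frac{(A_{ij}^{\rm ext,S})^\top}{\|A_{ij}^{\rm ext,S}\|}
\]
and a step size $t:=r_{ij}/\|A_{ij}^{\rm ext,S}\|$. By assumption $t<r_w$. The candidate is $\bar w_{ij}:=\bar w^{\rm nom}_{ij}+t u$. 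It lies in $\mathcal B(\bar w^{\rm nom}_{ij},r_w)\subset\bar{\mathcal W}_{a,ij}(z)$, since $\|tu\|=t<r_w$. Along this candidate,
\[
A_{ij}^{\rm ext,S}\bar w_{ij}=A_{ij}^{\rm ext,S}\bar w^{\rm nom}_{ij}+r_{ij},
\]
so $\eta_{4,ij}^{\rm ref}$ increases by exactly $r_{ij}$. It therefore reaches at least $\rho_{ij}^{(4)}$, which is the robust edge constraint.

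Finally, map back with $w_{ij}=S_{ij}\bar w_{ij}\in\mathcal W_{a,ij}(z)$. By definition of the projection, this $w_{ij}$ extends to some full input $w\in\mathcal W_a(z)$ whose $(i,j)$ block is $w_{ij}$. The edge constraint depends only on $w_{ij}$, so that $w$ satisfies the robust constraint for edge $(i,j)$, which proves feasibility.

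The main point needing care is this last step: the ball hypothesis is stated on the projection, and the notion of feasibility must be read consistently as existence of a full $w\in\mathcal W_a(z)$, not as a statement about all edges simultaneously. The strict inequality is used only to keep the candidate inside the (possibly open) ball; a closed ball would also allow equality.
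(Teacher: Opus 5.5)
Your proof is correct and follows the paper's argument. Both step from $\bar w^{\rm nom}_{ij}$ along the unit direction $(A_{ij}^{\rm ext,S})^\top/\|A_{ij}^{\rm ext,S}\|$ and use the ball hypothesis for admissibility; the only difference is that you take the exact step $r_{ij}/\|A_{ij}^{\rm ext,S}\|$, whereas the paper takes any $\lambda$ strictly between that value and $r_w$, and you also make explicit two steps the paper leaves implicit: admissibility of the nominal input as the ball's center, and lifting $w_{ij}$ to a full $w\in\mathcal W_a(z)$ through the projection.
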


\begin{proof}
If $r_{ij}=0$, the nominal input is feasible. Otherwise,
choose
$\delta\bar w_{ij}
=\lambda(A_{ij}^{\rm ext,S})^\top/\|A_{ij}^{\rm ext,S}\|$,
with
$r_{ij}/\|A_{ij}^{\rm ext,S}\|<\lambda<r_w$.
Then
$A_{ij}^{\rm ext,S}\delta\bar w_{ij}
=\lambda\|A_{ij}^{\rm ext,S}\|>r_{ij}$,
while $\bar w^{\rm nom}_{ij}+\delta\bar w_{ij}
\in\bar{\mathcal W}_{a,ij}(z)$.
\end{proof}

Let $A_{\rm ext}^S(z)$ stack the rows
$\bar A_{ij}^{\rm ext}(z)S$, $(i,j)\in\mathcal E$, and let
$r(z)$ stack the corresponding deficits $r_{ij}(z)$. Define
$\bar{\mathcal W}_a(z):=S^{-1}\mathcal W_a(z)$.

\begin{proposition}
\label{prop:multi_edge_feasibility}
Suppose
$\mathcal B(\bar w^{\rm nom},r_w)\subset\bar{\mathcal W}_a(z)$,
where $\bar w^{\rm nom}:=S^{-1}w^{\rm nom}$ and $r_w>0$.
If $A_{\rm ext}^S(z)$ has full row rank and
$\|(A_{\rm ext}^S(z))^\dagger r(z)\|<r_w$, then \eqref{eq:extended_qp} is feasible at $z$.
\end{proposition}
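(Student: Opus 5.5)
The plan is to mirror the single-edge argument of Proposition~\ref{prop:single_edge_feasibility}, replacing the normalized gradient direction by the minimum-norm correction furnished by the Moore--Penrose pseudoinverse. Work in the scaled coordinates $\bar w=S^{-1}w$, so that $A^{\rm ext}(z)w=A_{\rm ext}^S(z)\bar w$ and feasibility of~\eqref{eq:extended_qp} is equivalent to finding some $\bar w\in\bar{\mathcal W}_a(z)$ with $A_{\rm ext}^S(z)\bar w\ge r_{\rm ext}^{\rm rob}(z)$. The QP objective plays no role in feasibility, and since $\mathcal W_a(z)$ already encodes the thrust-bound constraints $\vartheta_{2,i}^{\pm}\ge0$, membership in $\bar{\mathcal W}_a(z)$ handles every constraint except the collision rows.

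Set $\delta\bar w:=(A_{\rm ext}^S)^\dagger r(z)$. Because $A_{\rm ext}^S$ has full row rank, $A_{\rm ext}^S(A_{\rm ext}^S)^\dagger=I_M$, so $A_{\rm ext}^S\delta\bar w=r(z)$ exactly. Define the candidate $\bar w:=\bar w^{\rm nom}+\delta\bar w$. Then $\|\bar w-\bar w^{\rm nom}\|=\|(A_{\rm ext}^S)^\dagger r\|<r_w$, so $\bar w$ lies in the ball $\mathcal B(\bar w^{\rm nom},r_w)$ and hence in $\bar{\mathcal W}_a(z)$. Consequently $w:=S\bar w\in\mathcal W_a(z)$.

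It remains to check each collision row. For edge $(i,j)$, linearity gives
\[
\bar A_{ij}^{\rm ext}w=\bar A_{ij}^{\rm ext}w^{\rm nom}+r_{ij}(z).
\]
By~\eqref{eq:eta4_affine}, $\bar A_{ij}^{\rm ext}w^{\rm nom}=\eta_{4,ij}^{\rm ref}(z,w_{ij}^{\rm nom})-\ell_{ij}^{\rm ref}(z)$. Since $r_{ij}=[\rho_{ij}^{(4)}-\eta_{4,ij}^{\rm ref}(z,w_{ij}^{\rm nom})]_+\ge\rho_{ij}^{(4)}-\eta_{4,ij}^{\rm ref}(z,w_{ij}^{\rm nom})$, it follows that
\[
\bar A_{ij}^{\rm ext}w\ge\rho_{ij}^{(4)}-\ell_{ij}^{\rm ref}=r_{{\rm ext},ij}^{\rm rob},
\]
which is exactly the robust fourth-order constraint for that edge. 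All rows hold simultaneously, so~\eqref{eq:extended_qp} is feasible at $z$; equivalently, $\Gamma^{\rm rob}_{\rm ext}(z)\ge0$ by Proposition~\ref{prop:extended_feasibility}.

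The main point needing care is the bookkeeping between the deficit $r_{ij}$, which is defined through $\eta_{4,ij}^{\rm ref}$ at the nominal input, and the constraint offset $r_{\rm ext}^{\rm rob}$. One must also note that the positive part only strengthens the correction. A second detail is that $\mathcal W_a(z)$ is nonempty because it contains the ball around $\bar w^{\rm nom}$, so the QP's feasible set is a nonempty compact convex set.
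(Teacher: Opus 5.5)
Your proof is correct and follows the paper's argument. Like the paper, you take the pseudoinverse correction $\delta\bar w=(A_{\rm ext}^S)^\dagger r(z)$, use full row rank to get $A_{\rm ext}^S\delta\bar w=r(z)$, and use the norm bound to stay inside $\mathcal B(\bar w^{\rm nom},r_w)\subset\bar{\mathcal W}_a(z)$; your row-by-row check that $\bar A_{ij}^{\rm ext}w^{\rm nom}+r_{ij}\ge r_{{\rm ext},ij}^{\rm rob}$, including rows with $r_{ij}=0$, just makes explicit what the paper compresses into ``all violated robust inequalities are satisfied.''
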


\begin{proof}
Choose $\delta\bar w=(A_{\rm ext}^S)^\dagger r$.
Full row rank gives $A_{\rm ext}^S\delta\bar w=r$,
so all violated robust inequalities are satisfied. The norm
condition ensures
$\bar w^{\rm nom}+\delta\bar w\in\bar{\mathcal W}_a(z)$.
\end{proof}

These conditions are pointwise and do not imply recursive
feasibility; accordingly, Theorem~\ref{thm:sf_safety} remains conditional on
persistent feasibility of~\eqref{eq:extended_qp}. Since
$A_{\rm ext}^S(z)\in\mathbb R^{M\times 4N}$, full row rank
requires $M=|\mathcal E|\le4N$, limiting
Proposition~\ref{prop:multi_edge_feasibility} for dense graphs.

\subsection{Decentralized implementation}

Assume $\mathcal W_0=\prod_{i=1}^N\mathcal W_{0,i}$ where each $\mathcal W_{0,i}\subset\mathbb R^4$ may encode
coupled actuator limits; the product assumption is only across agents. Define
\[
\mathcal W_{a,i}(z_i):=
\{w_i\in\mathcal W_{0,i}:
\vartheta_{2,i}^{-}(z_i,w_i)\ge0,\;
\vartheta_{2,i}^{+}(z_i,w_i)\ge0\}.
\]
Then $\mathcal W_a(z)=\prod_{i=1}^N\mathcal W_{a,i}(z_i)$.
Let $\mathcal N_i$ denote the neighbors of agent $i$.
Using a reference input $w_j^{\rm ref}$ for each neighbor,
agent $i$ computes
\[
\begin{aligned}
w_i^\star
&=\arg\min_{w_i\in\mathcal W_{a,i}(z_i)}\quad
\frac12\|w_i-w_i^{\rm nom}\|_{H_i}^2\\
{\rm s.t.}\quad&
\eta_{4,ij}^{\mathrm{ref}}(z,w_i,w_j^{\mathrm{ref}})
\ge
\rho_{ij}^{(4)}(z)+\rho_{ij}^{\rm comm}(z),
\quad j\in\mathcal N_i,
\end{aligned}
\]
where $H_i\succ0$. The reference may be a synchronized,
communicated, or predicted neighbor input.

Let $A_{ij,j}^{\rm ext}$ be the block of $A_{ij}^{\rm ext}$ multiplying $w_j$, and $S_j$ the diagonal block of $S$ corresponding
to $w_j$. If the normalized mismatch satisfies the known bound
$\|\bar w_j-\bar w_j^{\rm ref}\|\le\varepsilon_{ij}$, where
$\bar w_j:=S_j^{-1}w_j$ and
$\bar w_j^{\rm ref}:=S_j^{-1}w_j^{\rm ref}$, then one may take $\rho_{ij}^{\rm comm}(z)
:=
\|A_{ij,j}^{\rm ext}(z)S_j\|\,\varepsilon_{ij}$. Indeed,
$\eta_{4,ij}^{\mathrm{ref}}(z,w_i,w_j)
\geq
\eta_{4,ij}^{\mathrm{ref}}(z,w_i,w_j^{\mathrm{ref}})
-\rho_{ij}^{\mathrm{comm}}(z)$. Thus, if the mismatch bound holds, the local constraint implies the robust
pairwise HOCBF condition for the neighbor input; if $w_j^{\rm ref}=w_j$, $\rho_{ij}^{\rm comm}=0$.

\section{Simulation Study}
\label{sec:simulations}

We evaluate the proposed controller on a three-quadrotor crossing
maneuver designed to expose the distinction between loss of pairwise
thrust effectiveness and bounded-input HOCBF feasibility. The complete
interaction graph is used, with $d_{\min}=0.8$~m. The vehicles have
$m_i=1.30$~kg, $g_0=9.81$~m/s$^2$, and
$J_i=\operatorname{diag}(0.022,0.022,0.040)$~kg\,m$^2$.
The initial and final positions are
$p_1^0=(-2.5,-0.15,1.5)$, $p_1^f=(2.5,-0.15,1.5)$,
$p_2^0=(2.5,0.15,1.5)$, $p_2^f=(-2.5,0.15,1.5)$,
$p_3^0=(0,-2.5,2.4)$, and $p_3^f=(0,2.5,2.4)$.
A quintic smooth-step reference acts over $t\in[3,7]$~s, giving the
nominal crossing near $t=5$~s, with
$v_i(0)=0$, $R_i(0)=I$, $\Omega_i(0)=0$,
$f_i(0)=m_i g_0$, and $q_i(0)=0$.
The nominal tracking gains are
$k_p=4.5$, $k_v=3.4$, $k_R=7$, and $k_\Omega=0.85$,
while the extended thrust tracker uses $k_{fp}=25$ and $k_{fq}=10$.
We impose $f_i\in[1,25]$~N, $|\nu_i|\le100$~N/s$^2$, and
$\|\tau_i\|_\infty\le2.5$~N\,m. Accordingly,
$S_i=\operatorname{diag}(100,2.5,2.5,2.5)$ is used for the normalized
extended-input effectiveness and QP. The collision HOCBFs use
$\alpha_k(s)=8s$, $k=1,\ldots,4$, and the thrust-state HOCBFs use
$\alpha_{f,1}(s)=\alpha_{f,2}(s)=4s$.
The geometric comparison is performed with
$\Delta_{f,i}=\Delta_{\tau,i}=0$ to isolate singularity and aggregate
feasibility; uncertainty is introduced only in the GP experiment below.
All trajectories are integrated with fourth-order Runge--Kutta using a
$5$~ms step. An animation and additional diagnostics are provided in the
supplementary video at
\url{https://youtu.be/UIqZg36BcXE}.

We compare the nominal geometric tracking controller, the
second-order thrust-only HOCBF-QP of Section~II, and the proposed
fourth-order torque-aware HOCBF-QP. As shown in
Fig.~\ref{fig:sim_main} (a), the nominal maneuver violates the safety
distance, reaching $0.300$~m. The second-order QP becomes infeasible at
$t=4.53$~s, while the agents are still $1.673$~m apart. In contrast,
the proposed controller completes the maneuver with
$\min_{t,(i,j)\in\mathcal E}\|p_{ij}(t)\|
    =0.822~{\rm m}>d_{\min}$. Moreover, the minimum values of
$(\eta_{0,ij},\eta_{1,ij},\eta_{2,ij},\eta_{3,ij})$
over all edges and times are
$(0.035,0.100,0.195,0.495)$, respectively.

Figure~\ref{fig:sim_main} (b) isolates the pairwise input-channel
effect on the same nominal states. For edge $(1,2)$, at the safe-side
configuration $t=4.83$~s, where $d_{12}=0.813$~m, the thrust
effectiveness has decreased to $\gamma_{12}^{\rm th}=5.65\times10^{-2}$, whereas the torque-aware extended effectiveness is $\gamma_{12}^{\rm ext,S}=2.57\times10^{3}$. The latter also remains well above the uniform lower bound
$76.92$ predicted by~\eqref{eq:gamma_ext_uniform_bound}. Thus the loss of thrust
effectiveness does not propagate to the dynamically extended input
channel.

Pairwise nonsingularity alone does not ensure bounded-input
feasibility. At the first infeasible point of the second-order QP,
all three thrust-effectiveness measures remain nonzero,
$(\gamma_{12}^{\rm th},
 \gamma_{13}^{\rm th},
 \gamma_{23}^{\rm th})
= (0.493,1.979,1.949)$, while the exact aggregate margin satisfies
$\Gamma_{\rm th}^{\rm rob}=-2.815<0$; see
Fig.~\ref{fig:sim_robust} (a). This numerically illustrates the
distinction in Remark~2.3 between pairwise input effectiveness and
bounded-input feasibility. For the proposed controller,
$\Gamma_{\rm ext}^{\rm rob}$ remains positive throughout the
maneuver, with $\displaystyle{\min_t\Gamma_{\rm ext}^{\rm rob}(t)=5278.6}$. The actuator requirements remain within the prescribed
bounds:
$f_i\in[10.225,16.976]$~N,
$\max_i|q_i|=13.90$~N/s,
$\max_i|\nu_i|=50.44$~N/s$^2$, and
$\max_{i,k}|\tau_{ik}|=0.441$~N\,m.

Finally, we evaluate the robust controller under smooth,
agent-dependent force and torque perturbations of the form
$\Delta_{f,i}=D_{p,i}p_i+D_{v,i}v_i+b_{f,i}$ and
$\Delta_{\tau,i}=K_{\tau,i}\Omega_i+b_{\tau,i}$.
For each edge, a GP is trained offline on $\delta_{ij}^{(4)}$ using
202 samples of the standardized 40-dimensional pair state $z_{ij}$. The nominal fourth-order
trajectory is sampled every $0.1$~s and each sample is augmented by one
independently perturbed state, with standard deviations
$0.10$~m, $0.12$~m/s, $0.025$~rad, $0.04$~rad/s,
$0.35$~N, and $0.7$~N/s in
$(p,v,R,\Omega,f,q)$, respectively.
We use $k(z,z')
=
k_{\rm RBF}(z,z';\ell=4)+10^{-4}\delta_{zz'}$, with unit RBF amplitude, GP regularization $10^{-8}$,
output normalization, and no hyperparameter optimization.
We set $\beta_{ij}=16$, so
$\rho_{ij}^{(4)}=|\mu_{ij}^{\delta}|+4\sigma_{ij}^{\delta}$.
As shown in Fig.~\ref{fig:sim_robust}(b), along the disturbed
closed loop $\displaystyle{\max_{t,(i,j)}
\frac{|\delta_{ij}^{(4)}-\mu_{ij}^{\delta}|}
{4\sigma_{ij}^{\delta}}
=0.335<1}$, and all evaluated residual errors lie within the $4\sigma$ envelope.
The robust controller maintains $d_{\min}^{\rm obs}=0.895$~m,
$\min(\eta_{0,ij},\eta_{1,ij},\eta_{2,ij},\eta_{3,ij})
=(0.161,1.131,8.189,62.235)$,
$\min\eta_{4,ij}^{\rm true}=245.79$, and
$\min\Gamma_{\rm ext}^{\rm rob}=5528.58$.
The resulting actuator ranges remain
$f_i\in[9.527,18.018]$~N,
$\max|q_i|=17.01$~N/s,
$\max|\nu_i|=65.88$~N/s$^2$, and
$\max|\tau_{ik}|=0.499$~N\,m.

\begin{figure}[t]
    \centering
    \includegraphics[width=0.49\columnwidth]{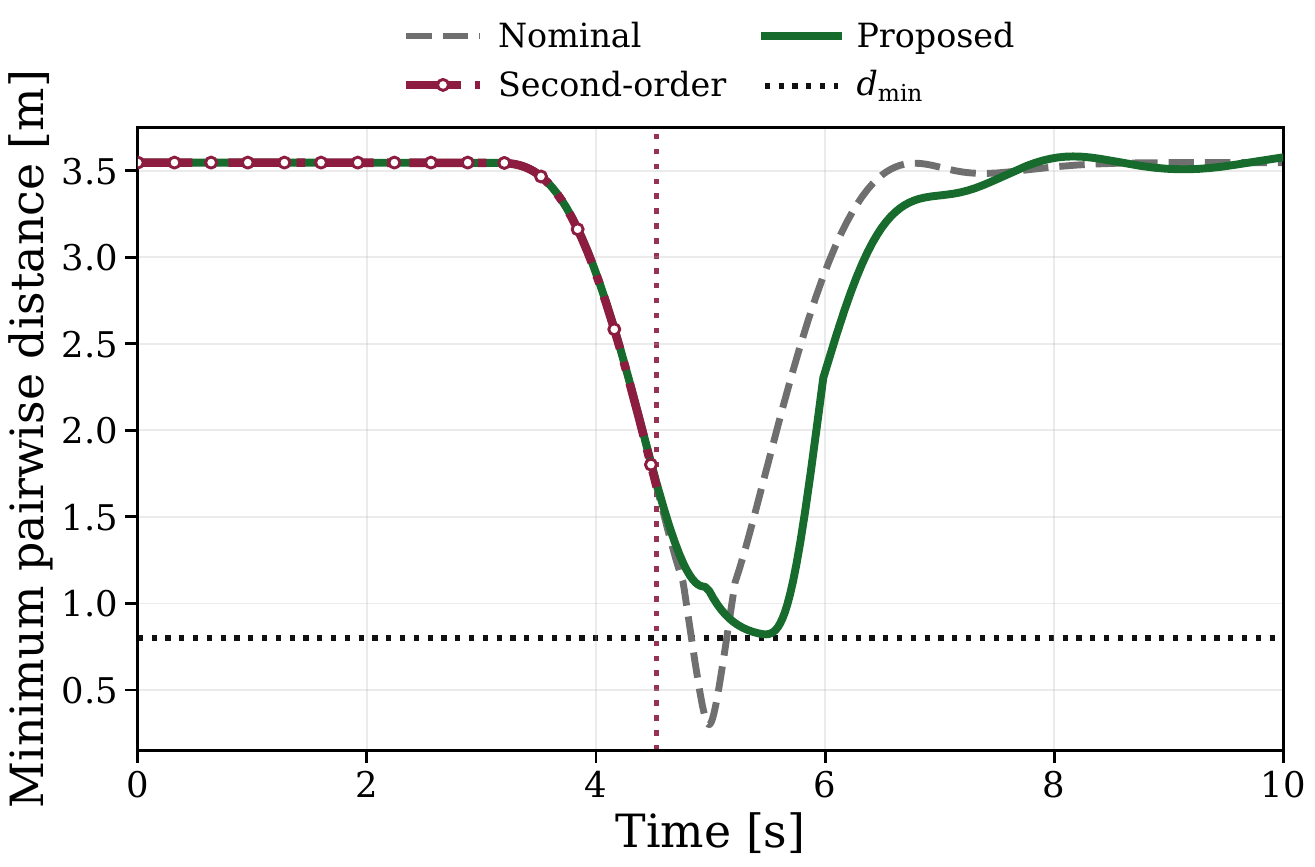}
    \hfill
    \includegraphics[width=0.49\columnwidth]{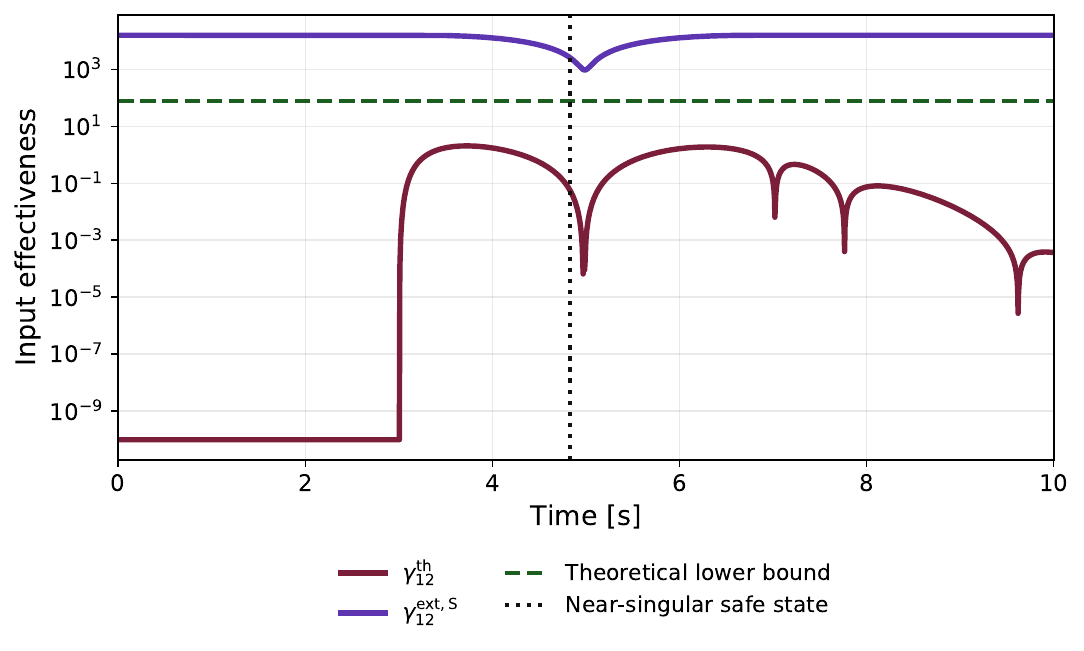}
    \caption{Crossing maneuver.
(a) Minimum pairwise distance for the nominal, second-order thrust-only,
and proposed controllers.
(b) Thrust and extended input effectiveness for edge $(1,2)$ on the same nominal states.}
    \label{fig:sim_main}
\end{figure}

\begin{figure}[t]
    \centering
    \includegraphics[width=0.49\columnwidth]{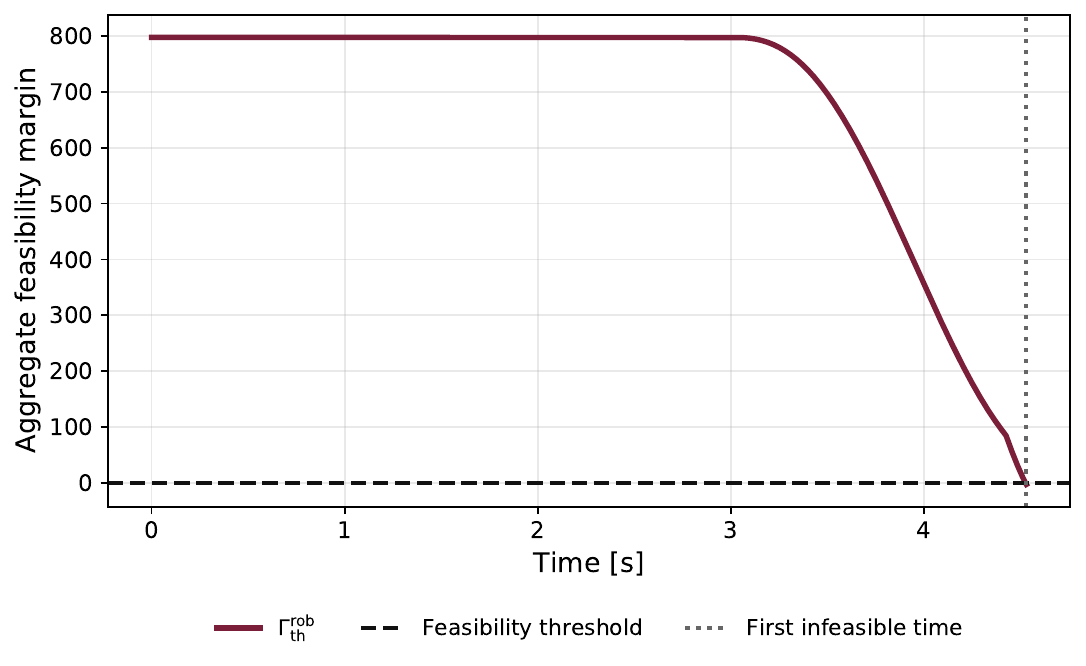}
    \hfill
    \includegraphics[width=0.49\columnwidth]{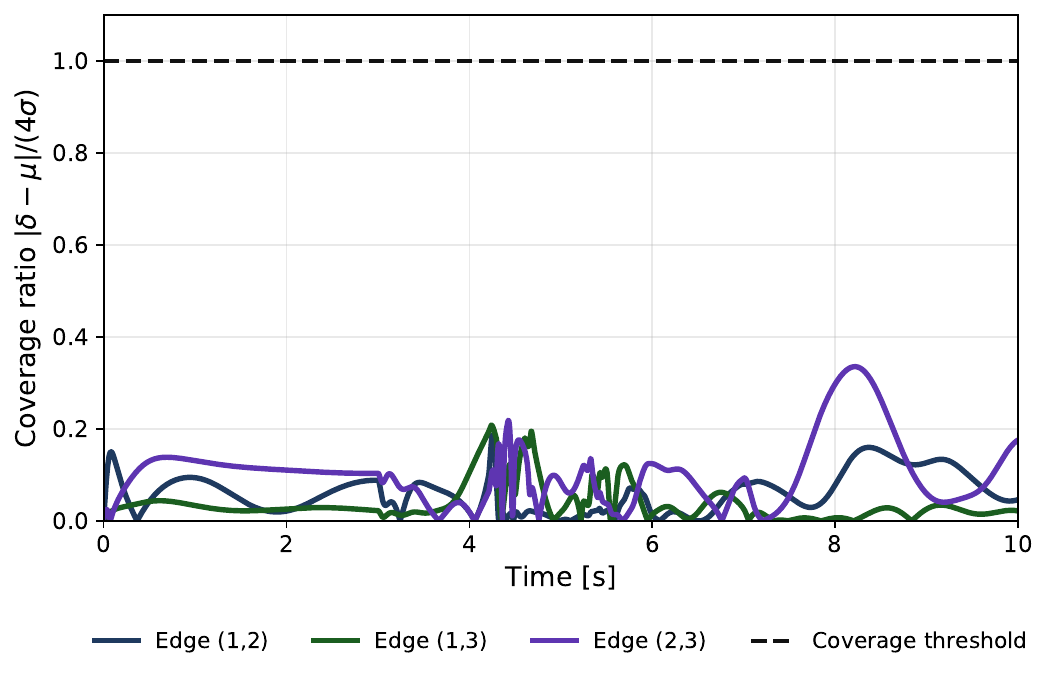}
    \caption{Aggregate feasibility and GP robustness.
(a) Exact thrust-feasibility margin along the second-order closed loop.
(b) Empirical residual-envelope coverage for all edges.}
    \label{fig:sim_robust}
\end{figure}

\section{Conclusions}
We studied pairwise singularity and aggregate feasibility in
safety-critical control of quadrotor teams under bounded inputs and
model uncertainty, characterized the thrust singular set, and derived
an exact aggregate feasibility margin for shared bounded-thrust
constraints. A torque-aware dynamic extension exposes thrust
acceleration and attitude torques in a fourth-order HOCBF, preventing
pairwise extended-input degeneracy under positive thrust bounds. Robustness is
obtained by directly learning the fourth-order HOCBF residual with GPs,
while normalized input-effectiveness measures provide local feasibility
conditions for the extended-input QP. Future work will address
experimental multi-UAV validation and less conservative
aggregate-feasibility guarantees.

\bibliographystyle{IEEEtran}
\bibliography{autosam}

\end{document}